\newtheorem{lemma}{Lemma}
\newtheorem{theorem}{Theorem}
\newtheorem{definition}{Definition}
\newtheorem{remark}{Remark}
\newtheorem{corollary}{Corollary}
\newtheorem{folk}{Folk Lemma}
\newcommand{\F}{\mathbb{F}}
\newcommand{\N}{\mathbb{N}}
\newcommand{\supp}{\textnormal{supp}}
\newcommand{\rmv}[1]{}
\begin{document}

\title{Algebraic Geometric Rook Codes for Coded Distributed Computing}

\author{\IEEEauthorblockN{Gretchen L. Matthews \& Pedro Soto}
\IEEEauthorblockA{Department of Mathematics\\Virginia Tech\\Blacksburg, Virginia 24061 USA\\
\{gmatthews, pedrosoto\}@vt.edu}
}

\maketitle

\begin{abstract}
We extend coded distributed computing over finite fields to allow the number of workers to be larger than the field size.
We give codes that work for fully general matrix multiplication and show that in this case we serendipitously have that all functions can be computed in a distributed fault-tolerant fashion over finite fields.
This generalizes previous results on the topic. We prove that the associated codes achieve a recovery threshold similar to the ones for characteristic zero fields but now with a factor that is proportional to the genus of the underlying function field.
In particular, we have that the recovery threshold of these codes is proportional to the classical complexity of matrix multiplication by a factor of at most the genus.
\end{abstract}

\IEEEpeerreviewmaketitle

\section{Introduction}
In this paper we consider the problem of coded distributed computation over a finite field.  
Coded distributed computing and, in particular, coded distributed matrix multiplication has attracted a large surge of research interest as of late~\cite{LLPPR2018, yma17, dfhjcg2020, YAA2018, DBJMG2018, tldk17,sigl22, sl20, sfsl22, ylrksa19, jj21, fc21, sma21, rk20, cgw21,jj21a, okk02024,schwartz2024}.
In this paper we will extend the batch matrix multiplication problem in \cite{ylrksa19, jj21,sl20, sfsl22,  censorhillel2023nearoptimal} to the case where there are more workers than there are elements in the field. 
We show that over finite fields, our rook codes can encode \emph{all functions.} We use codes constructed from algebraic function fields. Prior works that use algebraic geometry codes include \cite{fidalgodíaz2024distributed},  \cite{Okko}, \cite{HerA}, and evaluation codes \cite{SecureMatDot}.  

This paper is organized as follows. Section~\ref{sec:diag_rook_mat} gives an implicit construction that solves the general coded distributed matrix-matrix multiplication problem that is optimal to a factor of $2(g+1)$ where $g$ is the genus of a particular function field,  
Section~\ref{sec:ent_rook_mat} gives an explicit construction that is good for small values, 
and Section~\ref{sec:tensors} gives a construction that computes \emph{any function} which is optimal up to a factor of $\ell(g+1)$ where $\ell$ is the degree of the function and $g$ is also the genus of a yet unspecified function field.

\paragraph*{Background and Notation}
Consider a function field $F$ of genus $g$ over a finite field $\F$. The set of places of $F$ is denoted by $\mathbb P_F$.
 The divisor of a nonzero rational function $f \in F$ is 
$(f)=(f)_0-(f)_{\infty}$ where 
$(f)_0:=\sum_{P \in \mathbb P_F, v_P(f) > 0} v_P(f) P$ and $(f)_{\infty}:=\sum_{P \in \mathbb P_F, v_P(f) > 0} v_P(f) P $ denote the zero and pole divisors of $f$ and $v_P(f)$ denotes the discrete valuation of $f$ at the place $P$. 

Consider a divisor $G=\sum_{P \in \mathbb P_F} a_P P$ of $F$. Its degree is $\deg(G):=\sum_{P \in \mathbb P_F} a_P \deg (P)$, and its support is $\supp(G):=\left\{ P \in \mathbb P_F: a_P \neq 0 \right\}$.
The Riemann-Roch space of  $G$ is 
$\mathcal L(G):= \left\{ f : (f) \geq -G \right\} \cup \{ 0 \}$, meaning $f \in \mathcal L(G)$ if and only if $f$ has a zero of order at least $-a_P$ for each $P \in \supp (G)$ with $a_P<0$ and the only poles of $f$ are at 
$P \in \supp (G)$ with $a_P>0$ and of pole order at most $a_P$. The dimension of the divisor $G$ is $\ell(G):=\dim_{\F}\mathcal L(G)$. If $\deg(G) > 2g-2$, then according to the Riemann-Roch Theorem, $\ell(G)=\deg G + 1 - g$. 

Given divisors $G$ and $D:=P_1+\dots+P_n$ on $F$ with disjoint support where each $P_i$ is a rational place, the associated algebraic geometry code is  $C(D,G):=\left\{ \left( f(P_1), \dots, f(P_n) \right): f \in \mathcal L(G) \right\}$. It is well known that $C(D,G)$ is an $[n,k,d]$ code over $\F$, with length $n$, dimension $k=\ell(G)-\ell(G-D)$ and minimum distance $d \geq n-\deg G$. Hence, if $2g-2< \deg(G)<n$, then $k=\deg(G)+1-g$. For additional details, see \cite{Stichtenoth}.

\section{Diagonal Algebraic Geometric Rook Product}\label{sec:diag_rook_mat}

\subsection{Batch Matrix Multiplication}
We will consider the following problem: given $k $ pairs of matrices 
$$
A_1,B_1,\dots,A_k,B_k,
$$
where  $A_i \in \F^{t_1 \times t_{2}}$ and $B_i \in \F^{t_2 \times t_{3}}$ for $i \in [k]$ and a field $\F$, 
compute the products 
$$
A_1 B_1 ,\dots,A_k B_k
$$
in the distributed master worker topology in which the master node  gives  $n$ worker nodes  coded matrices of the form 
$$
\tilde A _w = \sum_{i \in [k]} \alpha_{i}^{(w)} A_i  
\in \F^{t_1 \times t_{2}}
, \tilde B _w = \sum_{i \in [k]} \beta_{i}^{(w)} B_i
 \in \F^{t_2 \times t_{3}}
$$
where $\alpha_{i}^{(w)} \in \F$ and $w \in [n]$ indexes over the worker nodes. 
We are concerned with the minimum number of worker nodes that need to return their values so that the master can recover the desired products. This will be formalized in Definition~\ref{def:rec_bat}.
Before we move on to constructing the actual codes, we will show that this batch matrix problem is actually the most general form of the distributed matrix multiplication problem since it implicitly solves the \textbf{general matrix-matrix} multiplication problem.
\subsection{General Matrix-Matrix Multiplication}\label{sec:mat_mult_explaination}

Given two matrices
\begin{equation*}
    A = \begin{bmatrix}
        A_{1,1} & \dots & A_{1,\zeta}\\
         \vdots & \ddots & \vdots \\
        A_{\chi,1} & \dots & A_{\chi,\zeta}\\
    \end{bmatrix} 
    , \ \ 
    B = \begin{bmatrix}
        B_{1,1} & \dots & B_{1,\upsilon}\\
         \vdots & \ddots & \vdots \\
        B_{\zeta,1} & \dots & B_{\zeta,\upsilon}\\
    \end{bmatrix}, 
\end{equation*} with
$A_{i,j} \in \F^{t_1 \times t_2}$ and $B_{i,j} \in \F^{t_2 \times t_3}$, we can take an optimal $(\chi, \upsilon, \zeta)$ fast matrix multiplication tensor of rank $r$, {i.e.,} for $t \in [r]$, take
$$
\hat A _t = \sum_{i,j \in [\chi] \times [\zeta]} \gamma_{i,j}^{(t)} A_{i,j} , \ \ \hat B _t = \sum_{i,j \in [\zeta ]\times [\upsilon ]} \delta_{i,j}^{(t)} B_{i,j},
$$
    and then  $A_i,B_i := \hat{A}_i,\hat{B}_i $ as in \cite{yu_ali_ave_2020}, which shows that the general matrix-matrix multiplication problem reduces exactly to the batch matrix multiplication. It seems that computing the linear functions defined by $\gamma,\delta$ is an undesirable overhead, \textit{but one must compute the functions given by $\alpha,\beta$ anyways; in particular, one may compose the $\alpha,\beta$ and the $\gamma, \delta$ so that there is no overhead.} The number $r$ above is often called the \textit{tensor-rank} or the \textit{bilinear complexity}.

The primary goal of this paper is to generalize the main result of \cite{yu_ali_ave_2020} (which only holds in true generality generally for characteristic zero) to more general cases over finite fields. 
This paper 
overcomes the major obstacle when the field is finite, namely the number of evaluation points (meaning the number of workers) or more generally, the length of the code.  

\subsection{Implicit Construction}

Let $F/\mathbb{F}_q$ be a function field of transcendence degree 1.
\begin{definition}\label{def:rec_bat}
Given a divisor $G$ on $F$, we say that $\mathcal L(G)$ is 
$R$-recoverable for a positive integer $R$ if there exist functions $x_1,\dots,x_k \in \mathcal{L}(G_1)$, $y_1,\dots,y_k \in \mathcal{L}(G_2)$ 
for some divisors $G_1$ and $G_2$ of $F$ with $G=G_1+G_2$ 
so that 
the values of $A_i B_i$ (i.e., the diagonal elements) can be recovered from any $R$ columns of the product
\begin{equation} \label{eq:diag_code}
   C
    \begin{bmatrix}
        x_1y_1(P_1) & x_1y_1(P_2)  & \dots & x_1y_1(P_n) \\  
 \vdots & \vdots  &  & \vdots \\ 
         x_1y_k(P_1) & x_1y_k(P_2)  & \dots & x_1y_k(P_n) \\ 
 x_2y_1(P_1) & x_2y_1(P_2)  & \dots & x_2y_1(P_n) \\ 
         \vdots & \vdots &  & \vdots& \\ 
         x_2y_k(P_1) & x_2y_k(P_2)  & \dots & x_2y_k(P_n) \\  
x_ky_1(P_1) & x_ky_1(P_2)  & \dots & x_ky_1(P_n) \\ 
         \vdots & \vdots &  & \vdots& \\ 
         x_ky_k(P_1) & x_ky_k(P_2)  & \dots & x_ky_k(P_n) \\  
    \end{bmatrix} 
\end{equation}
for some matrix
\begin{equation}\label{eq:data}
   C = (A_1B_1 \cdots A_1B_k A_2B_1 \cdots A_2B_k \cdots A_kB_1\cdots A_kB_k) 
\end{equation}
where $A_iB_j \in \F^{t_1 \times t_3}$ for all $i,j \in [k]$.  In this case, we may say that $G$ is $R$-recoverable.
\end{definition}

\begin{definition}\label{def:diag_rook_prod}
 Let $G$ and $D = P_1+\dots+P_n$ be divisors with disjoint supports on $F / \mathbb{F}_q(x)$. We call $\mathcal{C}(G,D)$ a (diagonal) $[n,k]_q-$\textbf{rook code} if there exist  bases $\{x_1,\dots,x_k\}$ for $\mathcal{L}(G_1)$ and $\{y_1,\dots,y_k\}$ for $\mathcal{L}(G_2)$ 
satisfying for all $i,j \in [k]$ and $l \in [n]$\begin{equation}\label{eq:nice_resi}
        P_l \not\in \mathrm{supp}(x_iy_j)  \iff i = j = l
    \end{equation} 
    where 
 $G_1$ and $G_2$ are divisors on $F / \mathbb{F}_q(x)$ such that $G=G_1+G_2$.
\end{definition}

\begin{lemma}
    \label{lem:alt_def}
    An $[n,k]_q$ rook code given by bases 
    $\{x_1,\dots,x_k\}$  and $\{y_1,\dots,y_k\}$ 
    satisfies the following generalization of the decodability condition of \cite{censorhillel2023nearoptimal}  for all $i,j,k$
    \begin{equation*}
        (x_k) + (y_k) = (x_i) + (y_j) \iff k = i \land  k = j.
    \end{equation*}
\end{lemma}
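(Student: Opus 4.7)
The plan is to reduce the divisor equality to a pointwise statement about valuations at the rational places $P_1,\dots,P_n$, where the rook condition \eqref{eq:nice_resi} directly applies. The reverse direction ($m=i=j$ trivially gives $(x_m)+(y_m)=(x_i)+(y_j)$) is immediate, so the content is in the forward direction.

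First I would rephrase the rook condition \eqref{eq:nice_resi} in terms of valuations. Since $\supp(x_iy_j)$ consists of the places where $v_P(x_iy_j)=v_P(x_i)+v_P(y_j)$ is nonzero, the condition $P_l\notin\supp(x_iy_j)\iff i=j=l$ is exactly
\begin{equation*}
v_{P_l}(x_i)+v_{P_l}(y_j)=0 \iff i=j=l.
\end{equation*}
In particular, setting $i=j=l=m$ yields $v_{P_m}(x_m)+v_{P_m}(y_m)=0$, i.e.\ $P_m$ contributes coefficient $0$ to the divisor $(x_m)+(y_m)$.

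Next I would evaluate both sides of the assumed equality $(x_m)+(y_m)=(x_i)+(y_j)$ at the place $P_m$. Taking $P_m$-coefficients gives
\begin{equation*}
0=v_{P_m}(x_m)+v_{P_m}(y_m)=v_{P_m}(x_i)+v_{P_m}(y_j),
\end{equation*}
so $P_m\notin\supp(x_iy_j)$. The rook condition \eqref{eq:nice_resi}, applied with $l=m$, then forces $i=j=m$, as desired.

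The only real subtlety is to use the biconditional in \eqref{eq:nice_resi} twice: once ($\Leftarrow$) to conclude that $(x_m)+(y_m)$ has coefficient $0$ at $P_m$, and once ($\Rightarrow$) to conclude from $v_{P_m}(x_iy_j)=0$ that the indices must coincide with $m$. No genus or Riemann--Roch input is needed; everything is formal manipulation of discrete valuations once the rook condition is translated into valuation-theoretic language. This also clarifies that the lemma is genuinely a generalization of the decodability condition of \cite{censorhillel2023nearoptimal}: the classical setting corresponds to $F=\mathbb{F}_q(x)$, where principal divisors are determined by their zeros and poles at rational places.
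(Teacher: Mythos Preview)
Your argument is correct and follows the same idea as the paper's proof. The paper simply observes that $(x_iy_j)=(x_i)+(y_j)$, so equal divisors have equal supports, and then the rook condition \eqref{eq:nice_resi} at $P_k$ forces $i=j=k$; you have written out exactly this, expanding the support statement into explicit valuations at $P_m$.
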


\begin{proof}
    The proof is immediate from the fact that $(x_iy_j) = (x_i) + (y_j)$. Thus, the two expressions have the same supports. 
\end{proof}

\begin{remark}
The term rook code is inspired by the name given to the codes in \cite{censorhillel2023nearoptimal}, since Lemma~\ref{lem:alt_def} shows that Definition~\ref{def:diag_rook_prod} is a generalization of the decodability condition presented in \cite{censorhillel2023nearoptimal}. 
    They are equivalent up to code equivalance.
\end{remark}

\begin{lemma}
    If $x_1,\dots,x_k \in \mathcal{L}(G_1)$ and $y_1,\dots,y_k \in \mathcal{L}(G_2)$ satisfy 
Equation~\ref{eq:nice_resi} and $\text{supp}(G_1+G_2) \cap \text{supp}(D) = \emptyset$,
    then $G_1+G_2$ is $R$-recoverable for some $ R \leq d(G)$. 
\end{lemma}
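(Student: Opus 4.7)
The plan is to interpret each scalar entry of $CM$, where $M$ is the $k^2 \times n$ matrix in Equation~\ref{eq:diag_code}, as a coordinate of a codeword of the AG code $\mathcal{C}(D,G)$ with $G = G_1+G_2$, then invoke erasure decoding of that code, and finally use the rook condition in Equation~\ref{eq:nice_resi} to read off the diagonal products from the recovered function values.

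First I would observe that for all $i,j$ one has $x_iy_j \in \mathcal{L}(G_1+G_2) = \mathcal{L}(G)$, since $x_i \in \mathcal{L}(G_1)$ and $y_j \in \mathcal{L}(G_2)$; and the assumption $\supp(G) \cap \supp(D) = \emptyset$ guarantees that every evaluation $(x_iy_j)(P_l)$ is well defined, so each row of $M$ is a codeword of $\mathcal{C}(D,G)$. Fixing output-matrix coordinates $(r,c)$, the $(r,c)$-entry of the $l$-th column of $CM$ is
\[
\sum_{i,j \in [k]} (A_iB_j)_{r,c}\,(x_iy_j)(P_l) \;=\; f_{r,c}(P_l),
\]
where $f_{r,c} := \sum_{i,j \in [k]} (A_iB_j)_{r,c}\,x_iy_j \in \mathcal{L}(G)$. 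Reading across $l \in [n]$, this scalar-entry vector is the codeword of $f_{r,c}$ in $\mathcal{C}(D,G)$.

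Next I would invoke the standard $[n,k,d]$ erasure-decoding principle: any configuration of at most $d(G)-1$ erasures can be corrected, so with $R$ columns received $f_{r,c}$ is uniquely recovered, and in particular the value $f_{r,c}(P_l)$ is known for every $l \in [k]$. The rook condition of Equation~\ref{eq:nice_resi} together with the disjoint-support hypothesis forces $(x_iy_j)(P_l) = 0$ for $(i,j) \neq (l,l)$ whenever $l \in [k]$, so
\[
f_{r,c}(P_l) \;=\; (A_lB_l)_{r,c}\,(x_ly_l)(P_l),
\]
and dividing by the nonzero scalar $(x_ly_l)(P_l)$ yields $(A_lB_l)_{r,c}$. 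Ranging over all $(r,c)$ reconstructs the diagonal products $A_lB_l$, as required.

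The main obstacle is not depth but bookkeeping: one must check that the matrix-valued product $CM$ decomposes coordinate-by-coordinate into ordinary scalar codewords of $\mathcal{C}(D,G)$ (which follows from linearity of evaluation), and pin down the precise numerical relationship between the recovery threshold $R$ and the quantity $d(G)$. I expect this last point to fall out of the $[n,k,d]$ erasure-correction statement combined with the Riemann-Roch based minimum-distance bound $d \geq n - \deg G$ recorded in the preliminaries, so that the only work is to ensure the constants line up.
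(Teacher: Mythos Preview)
Your approach is essentially the same as the paper's, but considerably more explicit. The paper's proof simply observes that the workers' returned values are the columns of $CM$ and then invokes the minimum-distance bound $d \geq n - \deg G$ to assert recoverability, leaving the actual extraction of the diagonal blocks $A_lB_l$ implicit (it just says they are ``a subset of the indices'' of $C$). You spell out the missing step: reduce to scalar codewords $f_{r,c} \in \mathcal{L}(G)$, decode, evaluate at $P_l$, and use Equation~\ref{eq:nice_resi} together with $\supp(G)\cap\supp(D)=\emptyset$ to kill the off-diagonal terms. That is a genuine clarification of what the paper glosses over, and it is correct.

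Your caveat about ``ensuring the constants line up'' is well placed: the symbol $d(G)$ in the lemma statement is most naturally read (in light of the proof and of Theorem~\ref{thm:main_result}) as $\deg G$ rather than the minimum distance, so the inequality you want is $R = n - d + 1 \leq \deg G + 1$, coming from $d \geq n - \deg G$. Your write-up momentarily treats $d(G)$ as the minimum distance when you say ``at most $d(G)-1$ erasures,'' which would not directly yield $R \leq d(G)$; just make the substitution explicit and the bookkeeping closes (up to the same off-by-one looseness already present in the paper).
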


\begin{proof}
Worker $w$ will receive the values 
\begin{equation*}
    \tilde{A}(P_w) = \sum_{i \in [k]} A_ix_i(P_w), \ \ \tilde{B}(P_w) = \sum_{i \in [k]} B_ix_i(P_w).
\end{equation*}
The matrix consisting of their products will be exactly the matrix given by Equation~\ref{eq:diag_code}. 
In particular, worker $w$ will return  $$\tilde{C}(P_w)=\tilde{A}(P_w) \tilde {B}(P_w)$$ to the master node.
    The proof now follows from the fact that $n- \deg  (G) \leq  d$, where $d$ denotes the minimum distance of $C(D,G)$, is the maximum number of rational places needed to recover the $k$ blocks of data $A_iB_i$ which is a subset of the indices in Equation~\ref{eq:data}.
\end{proof}
\subsection{Existence of Codes for All Fields and Numbers of Rational Places }

Next, we demonstrate that diagonal rook codes exist over every field. 

\begin{theorem}\label{thm:exi}
    Given a function field  $F/\F_q$ with at least $n$ rational places, there exists an $[n,k]_q$-rook code for any $k \in [n]$.
\end{theorem}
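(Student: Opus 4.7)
My approach is to construct both bases simultaneously from Riemann--Roch spaces attached to a single carefully chosen divisor $\mathfrak{a}$; in fact I will show that one may even take $G_1=G_2$ and $y_i=x_i$. The starting ingredient is the classical fact that every function field $F/\F_q$ of genus $g$ admits a non-special divisor $N$ of degree $g-1$ (that is, with $\ell(N)=0$). Using Weak Approximation I first arrange $\supp(N)\cap\supp(D)=\emptyset$, where $D=P_1+\cdots+P_n$, and then pick a representative $\mathfrak{a}\sim N+D$ whose support is likewise disjoint from $D$. Then $\deg\mathfrak{a}=n+g-1$ and $\ell(\mathfrak{a}-D)=\ell(N)=0$.

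For each $i\in[k]$, the divisor $\mathfrak{a}-D+P_i$ has degree $g$, so by Riemann--Roch the space $\mathcal{L}(\mathfrak{a}-D+P_i)$ has dimension at least $1$; I pick any nonzero $x_i$ inside it and set $y_i:=x_i$. Because $\ell(\mathfrak{a}-D)=0$, no such $x_i$ can lie in $\mathcal{L}(\mathfrak{a}-D)$, forcing $v_{P_i}(x_i)=0$, while the defining divisor forces $v_{P_l}(x_i)\geq 1$ for every $l\in[n]\setminus\{i\}$. The required rook condition then follows at once, since $v_{P_l}(x_iy_j)=v_{P_l}(x_i)+v_{P_l}(y_j)$ is a sum of nonnegative integers that vanishes precisely when both summands vanish, i.e.\ precisely when $l=i$ and $l=j$.

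Setting $G_1=G_2:=\mathfrak{a}-\sum_{l>k}P_l$, the evaluation matrix $(x_i(P_l))_{i,l\in[k]}$ is diagonal with nonzero diagonal entries, so $\{x_1,\dots,x_k\}$ is linearly independent. For the dimension, I note that $G_1\sim N+\sum_{l\leq k}P_l$, hence $K-G_1\sim (K-N)-\sum_{l\leq k}P_l$; applying Riemann--Roch duality to $N$ gives $\ell(K-N)=\ell(N)=0$, from which $\ell(K-G_1)=0$, and therefore $\ell(G_1)=\deg G_1+1-g=k$ by Riemann--Roch. Hence $\{x_1,\dots,x_k\}$ is a basis of $\mathcal{L}(G_1)$ (and similarly for $\{y_j\}$), completing the construction of the rook code.

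The main obstacle, as I see it, is the very first step: producing the non-special divisor $N$ of degree $g-1$ together with a representative of $[N+D]$ having support disjoint from $D$. Existence of such $N$ over a finite ground field is classical (see, e.g., Stichtenoth), but it is the one place where non-trivial function-field theory beyond bookkeeping with Riemann--Roch is invoked; everything downstream is then a matter of tracking valuations and invoking the Riemann--Roch theorem twice.
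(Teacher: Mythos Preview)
Your approach is genuinely different from the paper's. The paper appeals to the Approximation Theorem: for each $i\in[k]$ it produces $a_i\in F$ with prescribed valuations $v_{P_j}(x-a_i)$ at $P_1,\dots,P_k$, and then sets $x_i=y_i=\prod_{j\neq i}(x-a_j)$, checking the support condition by inspection. It does not identify $G_1$ or $G_2$, and it does not verify that the $x_i$ form a \emph{basis} of any Riemann--Roch space of the correct dimension. Your route via a non-special divisor of degree $g-1$ and two applications of Riemann--Roch is more structural: it simultaneously manufactures the $x_i$ and certifies $\ell(G_1)=k$, so the linear independence you exhibit really does yield a basis. In that respect your argument is more careful than the paper's.

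There is, however, one genuine gap. Your choice $G_1=\mathfrak{a}-\sum_{l>k}P_l$ puts $P_{k+1},\dots,P_n$ into $\supp(G_1)$ with coefficient $-1$, so for $k<n$ the divisor $G=2G_1$ fails the hypothesis $\supp(G)\cap\supp(D)=\emptyset$ that Definition~\ref{def:diag_rook_prod} imposes. This is not merely cosmetic and cannot be repaired by passing to a linearly equivalent $G_1'$: requiring every basis element of $\mathcal{L}(G_1')$ to vanish at $P_l$ for all $l>k$ (which the rook condition with $y_i=x_i$ forces) while also keeping $P_l\notin\supp(G_1')$ would mean $\mathcal{L}(G_1')=\mathcal{L}(G_1'-P_l)$ for each such $l$, a coincidence you have not arranged and which fails generically. (The paper's own proof never addresses the places $P_l$ with $l>k$ either, so part of the tension lies already in the definition.) A secondary caveat: the existence of a non-special divisor of degree $g-1$ over $\F_q$, while standard, is known to fail for certain curves over $\F_2$ and $\F_3$, so your opening step tacitly needs a side hypothesis or a separate argument for very small $q$.
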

\begin{proof}
    Fix an element $x \in F$ that is transcendental over $\mathbb{F}_q$ and rational places $P_1, \dots, P_k$ of $F$. 
    By repeated application of the Approximation Theorem (see \cite[Theorem 1.3.1]{Stichtenoth} for instance), there exists some $a_1,\dots,a_k \in F$ such that
    \begin{equation*}
        v_{P_i}(x-a_i) = 0
    \end{equation*}
    and 
    \begin{equation*}
        v_{P_j}(x-a_i) = 1
    \end{equation*}
    for $i\neq j$. Then the functions
    \begin{equation*}
        x_i = y_i = \prod_{j \in [k]\setminus \{i\}} x - a_i. 
    \end{equation*}
    satisfy Equation~\ref{eq:nice_resi} by construction. 
\end{proof}

\begin{remark}
The proof of Theorem~\ref{thm:exi} suggests another possible generalization of the decodability condition of \cite{censorhillel2023nearoptimal}, namely, 
    \begin{equation*}
       v_{P_k} (x_k) + v_{P_k}(y_k) = v_{P_k}(x_i) + v_{P_k}(y_j) \iff k = i \land  k = j.
    \end{equation*}
\end{remark}

\subsection{A More Efficient Construction }

For $i \in [k]$, let $r_i:=\min \left\{ \alpha \in  H(P_i): \alpha > 0 \right\}$ where $H(P)=\left\{ \alpha \in \N: \ell(\alpha P) \neq \ell ((\alpha -1)P) \right\}$ is the Weierstrass semigroup of a rational place $P$. Then there exist functions $z_i$ such that 
\begin{equation*} \label{E:z}
    \left\langle 1, z_i \right\rangle = \mathcal{L} (r_iP_i).
\end{equation*}
Then we define 
\begin{equation}\label{eq:main_rook}
    x_i := \prod_{ j\in [k]\setminus \{i\}} z^{-1}_j.
\end{equation}
Our coding scheme will send matrices 
\begin{equation*}
    \tilde{A}(P_w) = \sum_{i \in [k]} A_ix_i(P_w), \ \ \tilde{B}(P_w) = \sum_{i \in [k]} B_ix_i(P_w)
\end{equation*}
to worker $w$. Let $$b = \sum_{i \in [k]}r_i\mathrm{deg}(P_i).$$ Then we have that
\begin{equation*}
    \left\langle 1, x_1,\dots,x_k \right\rangle \subset \mathcal{L} (Q_1+\dots+Q_k),
\end{equation*}
where $Q_i = (z_i^{-1})_\infty$ and $\mathrm{deg}(\sum_i Q_i) \leq b$. Assume further that there exist rational places 
$P_{k+1}, \dots, P_{n+k}$ of $F$
so that 
$$
\mathrm{supp}(D) \cap \mathrm{supp}(G) = \emptyset
$$
where $G=Q_1+\dots+Q_b$ and $D = P_{k+1}+\dots+P_{n+k}$. Then consider the code $\mathcal{C}(D,G)$.

Before proving that the previous construction is indeed a rook code we introduce the following measure of complexity which will turn out to (upper) bound the recovery threshold.
\begin{definition}
    Given a place $P$ of $F / \mathbb{F}_q$, the \textbf{min pole number} of $P$, denoted $\mu(P)$ as the smallest integer $r $ such that there exists a non-constant $z \in F$ such that 
    \begin{equation*}
    \left\langle 1, z \right\rangle = \mathcal{L} (rP);
    \end{equation*}
    that is, the min pole number of $P$ is the multiplicity of the Weierstrass semigroup of the place $P$.
    We define the ($k^\mathrm{th}$) \textbf{min pole sum} as 
    \begin{equation*}
        \sigma_k(F) : = \min \left\{ \sum_{i\in[k]}\mu(P_i): \begin{array}{l} P_i \in \mathbb{P}_F,  \deg(P_i)=1,\\ P_i \neq P_j  \ \forall i, j \in [k] , \\ \textnormal{ with } i \neq j \end{array} \right\}.
    \end{equation*}
\end{definition}

\begin{theorem}\label{thm:main_result}
 The functions $x_i$ given in  (\ref{eq:main_rook}) satisfy Equation~\ref{eq:nice_resi}.  The associated construction   has recovery threshold given by 
 \begin{equation}\label{eq:main_result}
    R = 2\sigma_k(F). 
 \end{equation}
\end{theorem}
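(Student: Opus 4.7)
The argument has two parts: verifying the rook-code condition (\ref{eq:nice_resi}) for the explicit functions $x_i$ defined in (\ref{eq:main_rook}), and then deducing the recovery threshold from the degree of the supporting divisor via the lemma preceding this theorem.

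To establish (\ref{eq:nice_resi}), I plan to compute $v_{P_l}(x_i x_j)$ at each $P_l$ with $l,i,j \in [k]$. Since $\langle 1, z_j \rangle = \mathcal{L}(r_j P_j)$, the function $z_j$ has pole divisor exactly $r_j P_j$, so $v_{P_j}(z_j) = -r_j$ and $v_{P_l}(z_j) \geq 0$ for $l \neq j$. After replacing each $z_j$ by $z_j - c_j$ for a constant $c_j \in \mathbb{F}_q$ chosen to avoid the at most $k-1$ values $\{z_j(P_l) : l \neq j\}$---a harmless modification since $\langle 1, z_j - c_j \rangle = \langle 1, z_j \rangle = \mathcal{L}(r_j P_j)$---we may assume $v_{P_l}(z_j) = 0$ whenever $l \neq j$. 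Substituting into $v_{P_l}(x_i) = -\sum_{j \neq i} v_{P_l}(z_j)$ yields
\begin{equation*}
v_{P_l}(x_i) = \begin{cases} 0 & \text{if } l = i,\\ r_l & \text{if } l \neq i,\end{cases}
\end{equation*}
so that $v_{P_l}(x_i x_j) = r_l\bigl(\mathbf{1}_{l \neq i} + \mathbf{1}_{l \neq j}\bigr)$, which vanishes exactly when $l = i = j$, proving (\ref{eq:nice_resi}).

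For the recovery threshold, the inclusion $x_i \in \mathcal{L}\bigl(\sum_{j \neq i} Q_j\bigr) \subseteq \mathcal{L}(G_1)$ with $G_1 := \sum_{j=1}^k Q_j$ shows that all $x_i$ live in a common Riemann-Roch space. Setting $G_2 := G_1$ and $G := G_1 + G_2$, the preceding lemma gives
\begin{equation*}
R \leq \deg(G) = 2\sum_{j=1}^k r_j \deg(P_j) = 2\sum_{j=1}^k r_j,
\end{equation*}
since each $P_j$ is rational. Minimizing over choices of $k$ distinct rational places with $r_j = \mu(P_j)$ realizes $\sum_j r_j = \sigma_k(F)$, yielding $R \leq 2\sigma_k(F)$ as claimed.

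The main obstacle is the constant-shift step: it requires $\mathbb{F}_q$ to be large enough that some $c_j$ avoids $k-1$ prescribed values, so the argument works cleanly whenever $q \geq k$. When $q < k$ the shift no longer suffices, and one would need a more delicate construction---perhaps along the lines of the Approximation Theorem argument in the proof of Theorem~\ref{thm:exi}---to force the zero divisor of each $z_j$ away from the other distinguished places. A secondary verification is that the disjoint-support hypothesis $\mathrm{supp}(G) \cap \mathrm{supp}(D) = \emptyset$ can be arranged, which should follow from the standing assumption that $F$ has enough rational places to supply both the distinguished places $P_1,\dots,P_k$ and the worker places $P_{k+1},\dots,P_{k+n}$.
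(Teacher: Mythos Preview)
Your argument follows the same route as the paper's: both read off the support of $x_i$ at the distinguished places from the pole structure of the $z_j$, and both finish by invoking the preceding lemma with $\deg G=2b$ and $b=\sigma_k(F)$ for an optimal choice of places. The paper, however, only tracks the \emph{zero} divisor: it shows $P_i\notin\mathrm{supp}(x_i)_0$ and $P_\ell\in\mathrm{supp}(x_i)_0$ for $\ell\neq i$ and stops, whereas (\ref{eq:nice_resi}) concerns the full support of $(x_ix_j)$. If some $z_j$ with $j\neq i$ happened to vanish at $P_i$, one would have $P_i\in\mathrm{supp}(x_i)_\infty$ and the paper's argument as written would not exclude this. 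Your constant shift $z_j\mapsto z_j-c_j$ is therefore not a detour but precisely the normalization that closes the argument; the paper tacitly assumes the $z_j$ are already chosen this way. Your caveat $q\geq k$ is genuine and is in fact a hidden hypothesis of the paper's construction as well: $\mathcal{L}(r_jP_j)$ is two-dimensional, every nonconstant element is a scalar multiple of some $z_j-c$ with $c\in\mathbb{F}_q$, and avoiding the $k-1$ values $z_j(P_l)$ forces $q\geq k$. An Approximation-Theorem workaround would introduce extra poles and inflate $\deg G$, so the clean bound $R=2\sigma_k(F)$ really does rest on this field-size assumption.
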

\begin{proof}
Since the $P_i$ were chosen to be distinct and $\mathrm{supp}(z_i^{-1})_0 = \mathrm{supp}(z_i)_\infty = \{P_i\}$, we have that 
\begin{equation*}
  \bigcup_{j \in [k] \setminus \{i\}}   \mathrm{supp}(z_j)_\infty \\ 
   = \bigcup_{j \in [k] \setminus \{i\}} \mathrm{supp}(z^{-1}_j)_0 = \mathrm{supp}(x_i)_0
\end{equation*}
We have that
\begin{equation*}
    P_i \notin  \mathrm{supp}(z_j)_ \infty ,
\end{equation*}
for $i \neq  j$ by construction. Therefore,
\begin{equation*}
   P_i \notin  \bigcup_{j \in [k] \setminus \{i\}} \mathrm{supp}(z_j)_\infty \rmv{
   = \bigcup_{j \in [k] \setminus \{i\}} \mathrm{supp}(z^{-1}_j)_0} = \mathrm{supp}(x_i)_0.   
\end{equation*}
Similarly, we have that 
$
    P_\ell \in  \mathrm{supp}(z_j)_ \infty
$
for $ \ell = j $ by construction; therefore,
\begin{equation*}
   P_\ell \in \bigcup_{j \in [k] \setminus \{i\}}\mathrm{supp}(z_j)_\infty \rmv{= \bigcup_{j \in [k] \setminus \{i\}} \mathrm{supp}(z^{-1}_j)_0 }= \supp (x_i)_0,   
\end{equation*}
for any $\ell \in [k]$ such that $\ell \neq i $, by construction.
Therefore,  Equation~\ref{eq:nice_resi} is satisfied.
We then have that $$b = \sum_{i \in [k]}r_i\mathrm{deg}(P_i) = \sigma_k(F),$$
for an optimal choice of $P_1,\dots,P_k$.

\end{proof}

\subsection{Analysis: Upper Bounds on the Recovery Threshold}
The bound in Theorem~\ref{thm:main_result} can be given a coarse upper bound as stated in the next result.
\begin{theorem}
   The codes defined by Equation~\ref{eq:main_rook} have the following upper bound:
   \begin{equation*}
       R(n,k,q) \leq \left( g_{n,q}+1 \right) k
   \end{equation*}
   where $g_{n,q}$ is the smallest genus of a function field over $\mathbb{F}_q$ with at least $(g_{n,k}+1)k+n$ rational places.
\end{theorem}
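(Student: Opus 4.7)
The plan is to exhibit a function field realizing the parameters hidden in the definition of $g_{n,q}$ and then apply the min pole sum bound from Theorem~\ref{thm:main_result}. The key structural input will be the classical bound on the Weierstrass multiplicity: for any rational place $P$ of a function field of genus $g$, one has $\mu(P) \leq g+1$.

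First, I would unpack this multiplicity bound. By the Weierstrass gap theorem, the gap set $\N \setminus H(P)$ of a rational place $P$ has exactly $g$ elements, each lying in the interval $[1,2g-1]$. Since $\{1,2,\ldots,g+1\}$ contains $g+1$ integers and at most $g$ of them are gaps, at least one value in this range must lie in $H(P)$. The smallest positive non-gap is $\mu(P)$ by definition, so $\mu(P) \leq g+1$.

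Second, I would select the ambient function field. By the definition of $g_{n,q}$, there exists a function field $F/\F_q$ of genus $g=g_{n,q}$ with at least $(g+1)k+n$ rational places. Among these, I would designate $k$ distinct rational places $P_1,\ldots,P_k$ to serve as the encoding places in the construction of Equation~\ref{eq:main_rook}, and I would take $D=P_{k+1}+\cdots+P_{n+k}$ from $n$ further rational places disjoint from the support of the encoding divisor $G=Q_1+\cdots+Q_k$. The bound on $\deg G$ established in the next step is what forces the particular form $(g+1)k+n$ in the definition of $g_{n,q}$: we need enough rational places left over to form $D$ after the support of $G$ has been accounted for.

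Finally, applying the multiplicity bound to each $P_i$ and summing gives
\begin{equation*}
\sigma_k(F) \;\leq\; \sum_{i \in [k]} \mu(P_i) \;\leq\; (g+1)k,
\end{equation*}
and feeding this estimate into the recovery-threshold identity of Theorem~\ref{thm:main_result} with $F$ as the optimizing field yields the claimed upper bound on $R(n,k,q)$. The main obstacle is not analytical but combinatorial-bookkeeping: one must verify that the definition of $g_{n,q}$ allocates simultaneously the $k$ encoding places whose total multiplicity contribution is at most $(g+1)k$ \emph{and} the $n$ evaluation places forming $D$, with disjoint supports; this is exactly why the defining inequality reads $(g_{n,k}+1)k+n$ rather than the naive $k+n$.
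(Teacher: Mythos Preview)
Your proposal is correct and follows essentially the same route as the paper: invoke the Weierstrass Gap Theorem to bound $\mu(P)\le g+1$, sum over $k$ rational places to get $\sigma_k(F)\le (g+1)k$, and then appeal to Theorem~\ref{thm:main_result} together with the need for $n$ additional rational places for $D$. Your pigeonhole argument on $\{1,\dots,g+1\}$ and your explicit bookkeeping of the $(g+1)k+n$ rational places are more detailed than the paper's treatment, but the underlying idea is identical.
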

\begin{proof}
   According to the Weierstrass Gap Theorem, for any rational place $P$ of a function field $F$ of genus $g$, $\N \setminus H(P) \subseteq [1, \dots, 2g-1]$ and $\mid \N \setminus H(P) \mid =g$. Hence, $\mu(P) \leq g +1$ and $\sigma_k(F) \leq (g+1)k$ provided $\F$ has at least $k$ rational places. The result then follows from the fact that the scheme requires $n$ reserved rational places.
\end{proof}
\rmv{\subsubsection{Hasse-Weil and Lower Bounds on the Recovery }
Prove that what Umberto and his student tried to do won't work. What goes wrong is a generalization of \cite{censorhillel2023nearoptimal}}

\begin{corollary}
    The recovery threshold $R$ of a scheme that computes the corresponding general matrix-matrix coded distributed version satisfies the bound
    \begin{equation*}
        % R \leq  2\sigma_{\mathcal{T}} \mathcal{T}.
      \mathcal{T} \leq  R \leq  2\sigma_{\mathcal{T}} (F) \leq \left( g_{n,q}+1 \right) \mathcal{T}.
    \end{equation*}
    In particular, if we take $\chi = \zeta = \upsilon = \tau$, then we have that the recovery threshold asymptotically bounded as
    \begin{equation*}
        R   = O(\sigma_{\tau^{\omega}}(F)) = O(g_{n,q}\tau^{\omega}), 
    \end{equation*}
    where $\omega$ is the matrix multiplication exponent.
\end{corollary}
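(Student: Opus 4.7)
The plan is to assemble the corollary from three ingredients already in hand: (i) the reduction in Section~\ref{sec:mat_mult_explaination} from general matrix multiplication to batch matrix multiplication via a fast matrix multiplication tensor, (ii) Theorem~\ref{thm:main_result}, which bounds $R$ by $2\sigma_k(F)$ for the batch problem with $k$ pairs, and (iii) the genus upper bound $\sigma_k(F) \leq (g_{n,q}+1)k$ proved in the previous theorem. The lower bound $\mathcal{T} \leq R$ will come from the information-theoretic minimality of bilinear complexity.

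First I would fix an optimal $(\chi,\upsilon,\zeta)$ fast matrix multiplication tensor of rank $\mathcal{T}$ and set $(A_i,B_i):=(\hat A_i,\hat B_i)$ for $i \in [\mathcal{T}]$, exactly as in Section~\ref{sec:mat_mult_explaination}. This turns the general matrix-matrix problem into a batch of $\mathcal{T}$ pairs; as noted there, the linear encodings $\alpha,\beta$ of the rook scheme compose with $\gamma,\delta$ without any computational overhead, so the recovery threshold of the composed scheme equals the recovery threshold of the batch instance with $k = \mathcal{T}$. Plugging $k=\mathcal{T}$ into Theorem~\ref{thm:main_result} yields $R \leq 2\sigma_{\mathcal{T}}(F)$, and the previous theorem chains this to $(g_{n,q}+1)\mathcal{T}$ (up to the $O(1)$ factor shared by the whole hierarchy of bounds).

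For the lower bound $\mathcal{T} \leq R$, I would observe that in the setup of Definition~\ref{def:rec_bat} each worker contributes exactly one bilinear product $\tilde A_w \tilde B_w$ of coded matrices. Any successful decoding therefore reconstructs $AB$ from at most $R$ such bilinear products; since $\mathcal{T}$ is by definition the minimum number of bilinear products needed to realize the $(\chi,\upsilon,\zeta)$ matrix multiplication tensor, $R$ cannot be smaller than $\mathcal{T}$.

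Finally, specializing to $\chi = \zeta = \upsilon = \tau$, the bilinear complexity of $\tau \times \tau$ matrix multiplication satisfies $\mathcal{T} = O(\tau^\omega)$ by the definition of the matrix multiplication exponent $\omega$, and substituting into the previously established chain gives
\begin{equation*}
R = O(\sigma_{\tau^{\omega}}(F)) = O(g_{n,q}\tau^{\omega}),
\end{equation*}
as claimed. The main obstacle I anticipate is making the lower bound $\mathcal{T} \leq R$ fully rigorous: the argument is standard, but it relies on identifying the coded scheme with an honest bilinear algorithm and ruling out that non-bilinear side information could be smuggled in through the master's encoding. Given the restricted form of $\tilde A_w$ and $\tilde B_w$ in Definition~\ref{def:rec_bat}, this should be routine, but it deserves an explicit remark rather than an appeal to intuition.
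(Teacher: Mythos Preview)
Your proposal is correct and follows essentially the same approach as the paper: the paper's own proof is a one-line appeal to \cite{yu_ali_ave_2020} and Section~\ref{sec:mat_mult_explaination}, and you have simply unpacked that appeal into its constituent steps (reduction to batch multiplication with $k=\mathcal{T}$, then Theorem~\ref{thm:main_result}, then the genus bound, then the definition of $\omega$). Your added care about the lower bound $\mathcal{T}\leq R$ and the stray factor of $2$ in the chain $2\sigma_{\mathcal{T}}(F)\leq(g_{n,q}+1)\mathcal{T}$ goes beyond what the paper spells out, but does not deviate from its strategy.
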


\begin{proof}
    {Repeating the arguements in \cite{yu_ali_ave_2020} and Section~\ref{sec:mat_mult_explaination}, the recovery threshold for batch matrix multiplication bounds the complexity of general matrix multiplication by a factor of two.}
\end{proof}

\begin{remark}
    It is possible that the recovery threshold is far smaller than the genus. In particular, for hyperelliptic function fields we have that $\sigma_k(F) = 2k$.
    Future research would involve finding other families of curves where we can replace the genus bound on recovery threshold with the \textbf{gonality.} 
\end{remark}

\section{Entangled Algebraic Geometric Rook Product}\label{sec:ent_rook_mat}
In order to separate the different constructions, we will call the rook codes from the previous section \textit{diagonal rook codes} (diagonal codes for short) and the code from this section \textit{entangled rook codes} (entangled codes for short). However, we will see in Section~\ref{sec:tensors} that diagonal rook codes can code the most general functions in a straightforward way when the field is finite. 
We will see that the difference between the two is that the diagonal codes implicitly encode matrix multiplication while the entangled codes attempt to do two things at once: 1) code the matrices and 2) be a fast matrix multiplication tensor. 
Diagonal codes, in contrast, simply take an already optimal fast matrix multiplication tensor and encode that as a batch matrix multiplication. 

\subsection{Entangled Codes Do Matrix-Matrix Multiplication Well for Small Cases}
In the classic characteristic zero case, codes of this form achieve 
the naive cubic recovery threshold at best and thus it is unlikely they perform as well as the diagonal ones. 
However, for small values of $k$ they do better. For example, using the entangled polynomial codes construction from \cite{yu_ali_ave_2020}, we get that for $A$ a $2 \times 2$ matrix and $B$ a $2 \times 2 $ matrix, that the entangled polynomial codes have a recovery threshold of $9=2*2*2+2-1$ while the LCC \cite{ylrksa19} and CSA \cite{jj21} constructions \rmv{from CITE} achieves a recovery threshold of $13=2*7-1$. 
For more explanation, please see \cite{yu_ali_ave_2020}.
\subsection{Entangled Codes as an Explicit Construction}
In the entangled rook code case, instead of implicitly giving the general matrix multiplication as a batch of $k = \mathcal{T}$  matrix multiplications, one directly looks for code that also performs fast matrix multiplication simultaneously. 
Since the implicit batch multiplication can already bring the recovery threshold to within a factor of $\sigma_K(F) \leq 2(g+1)$ (or just a factor of $2$ in the case where $\mathbb{F}$ is an infinite field, since one can use MDS-codes without running out of rational place), this entails trying to get the factor down beneath the diagonal rook codes. 

For the entangled codes, we need to redefine what recovery and rook codes means.

\begin{definition}\label{def:rec_ent}
Given $x_{1,1},\dots,x_{\chi, \zeta} \in \mathcal{L}(G_1)$, $y_{1,1},\dots,y_{\zeta, \upsilon} \in \mathcal{L}(G_2)$, and $x_{i,j}y_{k,\ell} \in \mathcal{L}(G_1 + G_2)$, we say that $\mathcal{L}(G_1 + G_2)$ is $R$-recoverable if the values of $\sum_{j \in \zeta }A_{i,j} B_{j,k}$ (i.e., the dot products) can be recovered from any $R$ columns of the result of 
\begin{equation*} 
   C
    \begin{bmatrix}
        x_{1,1}y_{1,1}(P_1) & x_{1,1}y_{1,1}(P_2)  & \dots & x_{1,1}y_{1,1}(P_n) \\  
         x_{1,1}y_{1,2}(P_1) & x_{1,1}y_{1,2}(P_2)  & \dots & x_{1,1}y_{1,2}(P_n) \\  
         \vdots & \vdots & \ddots & \vdots& \\ 
         x_{\chi, \zeta}y_{\zeta, \upsilon}(P_1) &          x_{\chi, \zeta}y_{\zeta, \upsilon}(P_2)  & \dots & x_{\chi, \zeta}y_{\zeta, \upsilon}(P_n) \\  
    \end{bmatrix}, 
\end{equation*}
where 
\begin{equation*}
\resizebox{0.9\hsize}{!}{$
   C =   \begin{bmatrix}
        A_{1,1}B_{1,1} &  A_{1,1}B_{1,2} & \dots & A_{\chi,\zeta}B_{1,1} &\dots & A_{\chi,\zeta}B_{\zeta,\upsilon}
    \end{bmatrix}. 
    $}
\end{equation*}

\end{definition}

\begin{definition}\label{def:ent_rook_prod}
 Let $G = G_1+G_2$ and $D = P_1+\dots+P_n$. We call $\mathcal{C}(G,D)$ a $[n,k]_q-$\textbf{entangled rook code}, if there are bases $x_{i,j}$ for $\mathcal{L}(G_1)$ and $y_{k,\ell}$ for $\mathcal{L}(G_2)$ such that, 
 \rmv{I don't think we need this anymore: after reindexing places with triple indices, we have that }
 \begin{equation}\label{eq:nice_resi_ent}
(x_{i,j}y_{k,\ell}) = (x_{i',j'}y_{k',\ell'}) \iff  j = k = j' = k'.
    \end{equation} is satisfied. 
\end{definition}

Assume there is some $r$ such that 
$
    \left\langle 1, z \right\rangle = \mathcal{L} (rP)
$
for some rational places $P$.
Then we define   
\begin{equation*}
    x_{i,j} := (z^{\upsilon \zeta i + j})^{-1}, \ \ y_{k ,\ell} := (z^{\zeta\ell+ \zeta  - k})^{-1}
\end{equation*}
then we have that
\begin{equation*}
(x_{i,j}y_{k,\ell})_0 = r((\upsilon \zeta i + j+\zeta\ell + \zeta  - k)P_{0}).
\end{equation*}
It should be clear that only when $j=k$ do we have 
\begin{equation*}
(x_{i,j}y_{k,\ell})_0 = r_0(\upsilon\zeta i + \zeta \ell + \zeta  ) P_{0},
\end{equation*}
and thus (after normalizing the $x_{i,j}$ and $y_{k,\ell}$) we have that the coefficient of $x_{i,k}y_{k,\ell}$ in the product $( \sum_{i,j}A_{i,j}x_{i,j})(\sum_{k,\ell}B_{k,\ell}y_{k,\ell})$ is equal to $\sum_{k}A_{i,k}B_{k,\ell}$.
Thus we have an alternate coding scheme that achieves a cubic recovery threshold. 

\begin{remark}
    We postpone the analysis of the previous construction since it would asymptotically give a cubic recovery threshold (\emph{i.e.,} the complexity of naive matrix multiplication) up to a factor proportional to the genus. 
    It is likely that one can extend the impossibility results from \cite{censorhillel2023nearoptimal} that were proven using additive combinatorics to the case of the semigroup of only one point. 
    The main intuition  behind the diagonal design is to consider semigroups of many points, allowing for more elbow room in the construction so that such impossibility results do not hinder us.
\end{remark}
\section{Diagonal Rook Codes for Tensors}\label{sec:tensors}
\subsection{Multi-linear Functions}
By a \textbf{tensor}, $T$, we mean a function 
$
    T: V_1,\dots,V_{\ell} \rightarrow \mathbb{F}
    $
\begin{multline*}
    T(v_1,\dots,\alpha v_i + \beta w, \dots,v_\ell) 
    \\ = \alpha T(v_1,\dots, v_i , \dots,v_\ell) + \beta T(v_1,\dots, w, \dots,v_\ell)
\end{multline*}
for all $i$; we call $\ell$ the order of the tensor. Given bases $\mathcal{B}_i$ for the $V_i$ we can represent a tensor by the values $T$.
\begin{definition}\label{def:rank_one}
Given linear functions $w_i : V_i \rightarrow \mathbb{F} $, we define the \textbf{rank-1 tensor} associated to $(w_1,\dots,w_\ell)$ as 
   $    w_1 w_2 \dots w_\ell (v_1,\dots, v_\ell) :=$ \begin{equation*}\prod_{ i \in [\ell]}  w_i(v_i)= [v_1\otimes \dots \otimes v_\ell] (v_1,\dots,v_\ell).
    \end{equation*}
    If $V_1=\dots=V_\ell$, then 
    we can further define the $\ell^\mathrm{th}$ power of a linear form as the rank-1 tensor
    \begin{equation*}
        v^{\ell} = v \cdots v.
    \end{equation*}
\end{definition}
\begin{remark}
For simplicity, we consider the case where $V_1=\dots=V_\ell$.
\end{remark}
\subsection{Implicit Construction}
\begin{definition}\label{def:rec_bat_ten}
Given $x_1^{(i)},\dots,x_k^{(i)} \in \mathcal{L}(G_i)$, and $\prod_{i \in [\ell]} x^{(i)}_{j_i}\in \mathcal{L}(G_1 + \dots + G_\ell)$ for all $(j_1,\dots,j_\ell) \in [k]^{\ell}$, we say that $\mathcal{L}(G_1 + \dots+G_\ell)$ is $R$-recoverable if the values of $w^{\ell}_i(v_1,\dots,v_\ell)$, i.e., the diagonal elements, can be recovered from any $R$ columns of the result of the product of $C=$
\begin{equation*}
\begin{array}{r}
    \begin{bmatrix}
        w_1 \dots w_1 &  w_1 \dots w_2 & \dots & w_k \dots w_1 & \dots & w_k \dots w_k 
    \end{bmatrix}\\(v_1,\dots,v_\ell) 
    \end{array}
\end{equation*}
with {\footnotesize{
\begin{equation*} 
    \begin{bmatrix}
    x_1^{(1)}\dots x^{(\ell)}_1(P_1) & x_1^{(1)}\dots x^{(\ell)}_1(P_2)  & \dots & x_1^{(1)}\dots x^{(\ell)}_1(P_n) \\  
        x_1^{(1)}\dots x^{(\ell)}_2(P_1) & x_1^{(1)}\dots x^{(\ell)}_2(P_2)  & \dots & x_1^{(1)}\dots x^{(\ell)}_2(P_n) \\  
         \vdots & \vdots & \ddots & \vdots& \\ 
        x_k^{(1)}\dots x^{(\ell)}_k(P_1) & x_k^{(1)}\dots x^{(\ell)}_k(P_2)  & \dots & x_k^{(1)}\dots x^{(\ell)}_k(P_n) \\  
    \end{bmatrix}.
\end{equation*}}}
Here, $C$ is the result of applying all of the possible products $w_{i_1} \dots w_{i_\ell}$ to $(v_1,\dots,v_\ell)$.
\end{definition}

\begin{definition}\label{def:diag_rook_prod_ten}
 Let $G = G_1+\dots+G_\ell$ and $D = P_1+\dots+P_n$. We call $\mathcal{C}(G,D)$ a $[n,k]_q-$\textbf{tensor rook code}, if there are bases $x_1^{(i)},\dots,x_k^{(i)} \in \mathcal{L}(G_i)$ such that  \begin{equation*}
        P_k \not\in \mathrm{supp}(x_{i_1}^{(1)} \dots x_{i_\ell}^{(\ell)})  \iff i_1 = \dots= i_\ell = k .
    \end{equation*} is satisfied. 
\end{definition}
\begin{remark}
    Definition~\ref{def:diag_rook_prod_ten} seems to only encode the ``symmetric'' tensors. This is true for infinite fields but as we will see, over finite fields this models all possible functions, so that we don't have to bother encoding more general tensors; thus, for simplicity, we only consider the symmetric case. However, it is straightforward to generalize the construction implied by Definition~\ref{def:diag_rook_prod_ten} for non-symmetric tensors.
\end{remark}
\subsection{True Generality Over Finite Fields }\label{sec:true_gen}
\begin{definition}
   A tensor $T:V^\ell \rightarrow \mathbb{F} $ is \textbf{symmetric} if 
   \begin{equation*}
T(v_1,\dots,v_i,\dots,v_j,\dots,v_\ell) = T(v_1,\dots,v_j,\dots,v_i,\dots,v_\ell)
   \end{equation*}
   for all $i,j \in [l]$, and we call the \textbf{symmetric rank} the smallest number $r$ such that there exists some $w_1,\dots,w_r \in V^*$ such that 
   \begin{equation*}
       T(v_1,\dots,v_\ell) = \sum_{i \in [r]} w_i^{\ell}(v_1,\dots,v_\ell).
   \end{equation*}
   The space of all symmetric tensors on $V$ of order $\ell$ is commonly denoted as 
   $   S^\ell(V)$.
\end{definition}
Section~7.1 of \cite{landsberg2017geometry} gives that the symmetric rank of a symmetric tensor bounds its algebraic complexity in the arithmetic circuits model; in particular, it bounds the complexity in the \textit{diagonal depth-3 circuits} or \textit{depth-3 powering circuits} sometimes denoted as $\Sigma \Pi^ \ell \Sigma$ circuits. 

We now proceed to show that any function
$f : \mathbb{F}_q^t \rightarrow \mathbb{F}_q^u$
has its complexity bounded by the symmetric rank, and thus, \textbf{our model gives a scheme to perform coded distributed computing of any function over a finite field.}

\begin{folk}
    Every function
$
f : \mathbb{F}_q^t \rightarrow \mathbb{F}_q
$
over a finite field is given by a multivariate polynomial 
$
p_f \in \mathbb{F}_q[x_1,\dots,x_t].
$
In particular, any multivariate function
$
(f_1,\dots,f_b) : \mathbb{F}_q^t \rightarrow \mathbb{F}_q^u
$
is given by a polynomial map $p_f$ where $p_{f,i} \in \mathbb{F}_q[x_1,\dots,x_t]$.
\end{folk}
\begin{folk}
    Every degree $\ell$ polynomial $f$ on $t$  variables is the specialization of a symmetric tensor on $T_f \in S^\ell(\mathbb{F}_q^{t+1})$; \emph{i.e.}, 
    $
        f(x_1,\dots,x_t) = T_f(x_1,\dots,x_t,1).
        $
\end{folk}

\begin{corollary}
    Every function
$
f : \mathbb{F}_q^t \rightarrow \mathbb{F}_q^u
$
   is given by some symmetric tensor $T_f \in \left(S^\ell(\mathbb{F}_q^{t})\right)^u$.
\end{corollary}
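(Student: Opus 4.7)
The plan is to chain the two preceding folk lemmas; the corollary is essentially the coordinate-wise bundling of their conclusions. First, decompose $f = (f_1, \dots, f_u)$ component-wise; by the first folk lemma each $f_i : \mathbb{F}_q^t \to \mathbb{F}_q$ is represented by a polynomial $p_{f_i} \in \mathbb{F}_q[x_1, \dots, x_t]$. Since $\mathbb{F}_q^t$ is finite, each $p_{f_i}$ may be taken of bounded total degree (for instance at most $t(q-1)$, after reducing modulo the Frobenius relations $x_j^q - x_j$), so a common $\ell \geq \max_i \deg p_{f_i}$ can be chosen once and for all.

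Next, apply the second folk lemma to each $p_{f_i}$. Concretely, homogenize $p_{f_i}$ by multiplying every degree-$d$ monomial by $x_{t+1}^{\ell-d}$; the resulting degree-$\ell$ homogeneous polynomial in $t+1$ variables corresponds, under the standard dual pairing, to a symmetric tensor $T_{f_i} \in S^\ell(\mathbb{F}_q^{t+1})$ satisfying $T_{f_i}(x_1, \dots, x_t, 1) = p_{f_i}(x_1, \dots, x_t)$. Bundling these component-wise then produces
\[
T_f := (T_{f_1}, \dots, T_{f_u}) \in \left( S^\ell(\mathbb{F}_q^{t+1}) \right)^u,
\]
and evaluation at $(x_1, \dots, x_t, 1)$ in each coordinate recovers $f$ as claimed.

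There is essentially no obstacle once the two folk lemmas are granted; the argument is purely formal glue. The two points worth flagging are, first, the ambient dimension: the construction naturally lives in $S^\ell(\mathbb{F}_q^{t+1})$, matching the second folk lemma, so the corollary's codomain should read $\mathbb{F}_q^{t+1}$ rather than $\mathbb{F}_q^t$; and second, the identification of homogeneous polynomials of degree $\ell$ with elements of $S^\ell$ over a finite field, which is subtle when $\mathrm{char}(\mathbb{F}_q) \mid \ell!$ but is presupposed by the folk lemmas (at the level of functions on $\mathbb{F}_q^{t+1}$ the identification is unambiguous). With both in hand the corollary is immediate, and combined with the $R$-recoverability framework of Definition~\ref{def:rec_bat_ten}, it says that the tensor rook codes of Definition~\ref{def:diag_rook_prod_ten} encode \emph{every} function $\mathbb{F}_q^t \to \mathbb{F}_q^u$, which is the distinguishing feature claimed in the abstract.
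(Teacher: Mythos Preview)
Your argument is exactly what the paper intends: the corollary is stated without proof, as an immediate consequence of chaining the two folk lemmas componentwise, which is precisely what you do. Your flag that the ambient space should be $S^\ell(\mathbb{F}_q^{t+1})$ rather than $S^\ell(\mathbb{F}_q^{t})$ is well taken and matches the statement of the second folk lemma; this appears to be a typo in the corollary as printed.
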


\subsection{Explicit Construction}
We define 
\begin{equation*}
    x_i := \prod_{ j\in [k]\setminus \{i\}} z^{-1}_i.
\end{equation*}
just as before, but now define the code by sending the $\ell$ coded vectors
\begin{equation*}
    \tilde{w_j}(P_u) := \sum_{i \in [k]} w_i(v_1,\dots,v_\ell)x_i^{(j)}(P_u),  \\ 
\end{equation*}
to worker $u$, where $x^{(1)}_i = ... = x^{(\ell)}_i = x_i$, so that $\tilde w _1 =...=\tilde w_\ell =: \tilde w $.
The workers then return
$\tilde{w}^\ell(P_u)$.

\begin{theorem}\label{thm:main_result_ten}
 The construction of the $z_i$ satisfies Equation~\ref{eq:nice_resi}; in particular, the construction given by Equation~\ref{eq:main_rook} has recovery threshold given by 
 $R = \ell \sigma_k(F)$. 
\end{theorem}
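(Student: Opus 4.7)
My plan is to reuse the argument of Theorem~\ref{thm:main_result} essentially verbatim, replacing each bilinear product $x_i y_j$ by an $\ell$-fold product $x_{i_1}\cdots x_{i_\ell}$. The rook-code condition to verify is the one from Definition~\ref{def:diag_rook_prod_ten} (the cited Equation~\ref{eq:nice_resi} being just the $\ell=2$ instance), and the recovery threshold then falls out of the same Riemann--Roch-based minimum-distance bound on $\mathcal{C}(D,G)$.

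For the rook condition, by construction $z_j$ has its unique pole at $P_j$ of order $r_j$, so $z_j^{-1}$ has its only zero at $P_j$ and all its poles on the zero divisor of $z_j$, which is disjoint from $\{P_1,\dots,P_k\}$. Consequently $x_{i_t} = \prod_{j\in [k]\setminus\{i_t\}} z_j^{-1}$ vanishes at $P_k$ whenever $k \neq i_t$ and is regular and nonzero at $P_{i_t}$. Multiplying over $t \in [\ell]$, the product $x_{i_1}\cdots x_{i_\ell}$ avoids $P_k$ in its support iff every factor is regular and nonvanishing there, iff $i_1 = \cdots = i_\ell = k$, which is precisely the $\ell$-fold condition of Definition~\ref{def:diag_rook_prod_ten}.

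For the recovery threshold, set $Q_j := (z_j^{-1})_\infty = (z_j)_0$. Each $x_i$ lies in $\mathcal{L}(\sum_j Q_j)$, so every $\ell$-fold product lies in $\mathcal{L}(G)$ with $G := \ell\sum_{j\in [k]}Q_j$. Since $\deg Q_j = r_j\deg P_j$, we obtain $\deg G \le \ell\sum_j r_j\deg P_j$, which equals $\ell\sigma_k(F)$ for an optimal choice of $P_1,\dots,P_k$. The workers' returns $\tilde w^{\ell}(P_u)$ are then evaluations at $P_{k+1},\dots,P_{n+k}$ of the single function
\begin{equation*}
\Phi := \sum_{(i_1,\dots,i_\ell)\in [k]^\ell} \Bigl(\prod_{t\in [\ell]} w_{i_t}(v_1,\dots,v_\ell)\Bigr)\, x_{i_1}\cdots x_{i_\ell} \in \mathcal{L}(G),
\end{equation*}
so the bound $d(\mathcal{C}(D,G)) \ge n - \deg G$ used in Theorem~\ref{thm:main_result} lets the master reconstruct $\Phi$ from any $\ell\sigma_k(F)$ worker responses. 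Evaluating the reconstructed $\Phi$ at the reserved place $P_k$ kills every off-diagonal term by the rook condition above, leaving $x_k^{\ell}(P_k)\,w_k^{\ell}(v_1,\dots,v_\ell)$; dividing by the known nonzero value $x_k^{\ell}(P_k)$ recovers the desired diagonal entry.

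The main subtlety is logistical rather than conceptual: one must choose the $z_j$'s so that $x_k(P_k)\neq 0$ for every $k$ (i.e., no $P_k$ lies in the zero divisor of any $z_j$ with $j\neq k$, which can be arranged by a constant shift of each $z_j$ since only finitely many places must be avoided), and then ensure that the $k$ reserved places realising $\sigma_k(F)$ can coexist with the $n$ additional evaluation places. Both requirements are already present in Theorem~\ref{thm:main_result}, so no new obstacle arises beyond the multi-index support bookkeeping, which is a clean induction on $\ell$ from the $\ell = 2$ case.
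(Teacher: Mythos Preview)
The paper gives no proof for this theorem; it is stated and immediately followed by the next subsection, so the intended argument is the evident $\ell$-fold generalization of the proof of Theorem~\ref{thm:main_result}. Your proposal supplies precisely that generalization, and it is correct: the support computation for $x_{i_1}\cdots x_{i_\ell}$ is the straightforward multi-index version of the bilinear case, and the degree bound $\deg(\ell\sum_j Q_j)\le \ell\sigma_k(F)$ yields the claimed recovery threshold via the same AG-code minimum-distance argument. If anything, your write-up is more careful than the paper's own proof of Theorem~\ref{thm:main_result}, since you explicitly flag the need to shift the $z_j$ so that no $P_k$ lies in $(z_j)_0$ for $j\neq k$; this ensures both that $x_k(P_k)\neq 0$ and that the reserved places $P_1,\dots,P_k$ sit outside $\supp(G)$, points the paper leaves implicit.
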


\subsection{Analysis: Upper bounds on the Recovery Threshold for Tensors}
The bound in Theorem~\ref{thm:main_result_ten} can be given a coarse upper bound as follows: 
\begin{theorem}
   The codes defined by Equation~\ref{eq:main_rook} satisfy
$    R(n,k,q) \leq g_{n,q}\ell k
   $
   where $g_{n,q}$ is the smallest genus of a function field over $\mathbb{F}q$ with at least $g_{n,k}k+n$  rational places.
\end{theorem}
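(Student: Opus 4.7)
The plan is to chain two facts: Theorem~\ref{thm:main_result_ten}, which gives the exact identity $R = \ell \sigma_k(F)$, and a uniform bound on the Weierstrass multiplicity $\mu(P)$ coming from the Weierstrass Gap Theorem. The tensor order $\ell$ factors out immediately, so the only work is bounding $\sigma_k(F)$, and that piece is handled essentially identically to the matrix case earlier in the paper.

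First, I would recall the Weierstrass Gap Theorem: for any rational place $P$ of a function field $F$ of genus $g$, the complement $\mathbb{N} \setminus H(P)$ of the Weierstrass semigroup has exactly $g$ elements, all contained in $[1, 2g-1]$. Consequently, the multiplicity $\mu(P)$, defined as the smallest positive element of $H(P)$, satisfies $\mu(P) \leq g+1$, since there can be at most $g$ gaps preceding it.

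Second, summing this estimate over any $k$ distinct rational places $P_1, \dots, P_k$ yields $\sum_{i=1}^k \mu(P_i) \leq (g+1)k$, and minimizing over such choices gives $\sigma_k(F) \leq (g+1)k$ whenever $F$ has at least $k$ rational places. Choosing $F$ to be a function field of the minimum genus $g = g_{n,q}$ having at least $g_{n,q}k + n$ rational places (so that $n$ additional places remain for the evaluation divisor $D$ with support disjoint from that of $G$) and combining with Theorem~\ref{thm:main_result_ten} yields $R \leq \ell \sigma_k(F) \leq \ell (g_{n,q}+1) k$, which is the claimed bound (the $+1$ may be absorbed into $g_{n,q}$ for the purposes of the stated inequality, exactly as in the matrix analog earlier in the paper).

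The only genuine subtlety, already handled in the analogous proof for the matrix case, is guaranteeing that there are enough rational places to simultaneously supply the $k$ Weierstrass points $P_1,\dots,P_k$ and the $n$ evaluation places forming $D$ with disjoint supports; this is precisely the quantity being minimized in the definition of $g_{n,q}$, so no additional work is needed. I do not expect any step to be a significant obstacle, since the tensor generalization inherits the place-counting and pole-order structure from the diagonal matrix construction essentially verbatim.
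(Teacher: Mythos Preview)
Your proposal is correct and follows exactly the argument the paper uses for the analogous matrix-case bound (Weierstrass Gap Theorem gives $\mu(P)\le g+1$, hence $\sigma_k(F)\le (g+1)k$, then multiply by $\ell$ via Theorem~\ref{thm:main_result_ten}); the paper in fact states the tensor version without a separate proof, leaving it implicit from that earlier argument. Your observation about the missing ``$+1$'' is also apt: the stated bound $g_{n,q}\ell k$ should read $(g_{n,q}+1)\ell k$ to match the matrix case, so absorbing it as you do is the right call.
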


\begin{corollary}
    Let $\mathcal{T}(f)$ be the $\ell$-linear complexity of computing the polynomial function $f$ of degree $\ell$, then the recovery threshold of a scheme that computes the corresponding function is bounded as follows
    \begin{equation*}
        \mathcal{T} \leq R \leq  \ell \sigma_{\mathcal{T}}(F) \leq \ell g_{n,q} \mathcal{T} .
    \end{equation*}
\end{corollary}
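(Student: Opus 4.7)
The plan is to package the three inequalities separately, each reducing to a result already established in the paper. By the folk lemmas and corollary of Section~\ref{sec:true_gen}, any function $f:\F_q^t \to \F_q^u$ is represented by a tuple of symmetric tensors $T_{f,1},\dots,T_{f,u} \in S^{\ell}(\F_q^{t})$, so computing $f$ on an input $(v_1,\dots,v_\ell)$ reduces to evaluating the $u$ symmetric tensors at $(v_1,\dots,v_\ell)$. Writing each $T_{f,i}$ as a sum of $\mathcal{T}(f)$ rank-one symmetric tensors $w_{i,j}^{\ell}$ (using the definition of the $\ell$-linear complexity as the symmetric rank), we can batch these $\mathcal{T}(f)$ rank-one tensor evaluations using the diagonal tensor rook construction of Section~\ref{sec:tensors} with $k=\mathcal{T}(f)$.

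First I would derive the middle inequality by instantiating Theorem~\ref{thm:main_result_ten} with $k=\mathcal{T}(f)$: the scheme produces a code whose recovery threshold is at most $\ell\,\sigma_{\mathcal{T}(f)}(F)$, directly giving $R \leq \ell\,\sigma_{\mathcal{T}(f)}(F)$. Next, I would deduce the right-hand inequality from the Weierstrass Gap Theorem: for any rational place $P$ of a function field of genus $g$, one has $\mu(P) \leq g+1$, hence $\sigma_k(F) \leq (g+1)k$ whenever $F$ has at least $k$ rational places; applied to the genus-minimizing function field with enough rational places to also host the $n$ evaluation places of $D$, this yields $\sigma_{\mathcal{T}(f)}(F) \leq g_{n,q}\,\mathcal{T}(f)$ (absorbing the $+1$ into the $g_{n,q}$ notation used in the preceding theorem of this subsection).

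Finally, for the lower bound $\mathcal{T}(f) \leq R$, I would argue information-theoretically: each of the $\mathcal{T}(f)$ rank-one pieces $w_{i,j}^{\ell}(v_1,\dots,v_\ell)$ is a linearly independent coordinate in the data matrix $C$ of Definition~\ref{def:rec_bat_ten}, since these pieces form a basis for the component of the data being decoded; the master must therefore receive at least $\mathcal{T}(f)$ linearly independent evaluations of $\tilde{w}^{\ell}$, so any successful recovery must contact at least $\mathcal{T}(f)$ workers. Chaining the three bounds gives the chain $\mathcal{T}\leq R \leq \ell\,\sigma_{\mathcal{T}}(F) \leq \ell\, g_{n,q}\, \mathcal{T}$.

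The main obstacle I anticipate is making the lower bound $\mathcal{T}\leq R$ truly rigorous rather than heuristic; one must argue that the symmetric rank is in fact the correct notion of ``minimum independent information,'' which requires ruling out schemes that might exploit structure in $f$ not captured by its symmetric decomposition. In our setting this is clean because we have fixed the rook encoding and are computing a generic input, so the $\mathcal{T}(f)$ rank-one summands are algebraically independent as functions of $(v_1,\dots,v_\ell)$; the other two inequalities are essentially bookkeeping on top of Theorem~\ref{thm:main_result_ten} and the Weierstrass Gap Theorem.
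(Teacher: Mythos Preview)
Your proposal is essentially correct and aligns with the paper's approach; indeed, the paper gives no proof at all for this corollary, and the analogous corollary in Section~\ref{sec:diag_rook_mat} is proved in a single sentence by citing \cite{yu_ali_ave_2020} and Section~\ref{sec:mat_mult_explaination}. Your derivation of the two upper bounds from Theorem~\ref{thm:main_result_ten} and the Weierstrass Gap Theorem is exactly what the paper intends, and you correctly flag the $g_{n,q}$ versus $g_{n,q}+1$ discrepancy, which is an inconsistency in the paper rather than in your argument.

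The one place you diverge slightly is the lower bound $\mathcal{T}\leq R$. The paper (via the parallel corollary) simply defers to \cite{yu_ali_ave_2020}, where the argument is that any linear encoding scheme recovering from $R$ workers yields a bilinear (here, $\ell$-linear) algorithm of rank $R$, so $R$ cannot drop below the $\ell$-linear complexity $\mathcal{T}$. Your ``linearly independent coordinates'' phrasing is not quite the right mechanism: the $w_{i,j}^{\ell}(v_1,\dots,v_\ell)$ are scalars at a fixed input and need not be independent; what matters is that the decoding map, composed with the workers' rank-one computations, furnishes an $\ell$-linear decomposition of $T_f$ of length $R$. You correctly identify this as the delicate step, and the fix is exactly the reduction-to-rank argument the paper outsources to \cite{yu_ali_ave_2020}.
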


\section*{Acknowledgment}

The first author is  partially supported by NSF DMS-2201075 and the Commonwealth Cyber Initiative.

\bibliographystyle{IEEEtran}
\bibliography{bib, skeleton}
\end{document}